\documentclass[]{article}
\usepackage[left=2cm,right=2cm, top=2.4cm,bottom=2.4cm,bindingoffset=0cm]{geometry}
\usepackage{tikz, titlesec, tikz-cd}
\usetikzlibrary{calc,intersections,through,backgrounds}
\usepackage{amsmath, amssymb}
\usepackage{amsthm, mathtools}

\usepackage{array}
\newcolumntype{?}{!{\vrule width 1pt}}

\usepackage{ifthen}

\usepackage{faktor}

\mathtoolsset{showonlyrefs}

\usepackage{makecell}

\usepackage[numbers, sort&compress]{natbib}

\usepackage[bottom]{footmisc}

\usepackage{hyperref}

\definecolor{color1}{RGB}{127,201,127}
\definecolor{color2}{RGB}{190,174,212}
\definecolor{color3}{RGB}{253,192,134}
\definecolor{color4}{RGB}{255,255,153}

\usetikzlibrary{positioning}
\usetikzlibrary{decorations.text}
\usetikzlibrary{decorations.pathmorphing}
\usetikzlibrary{decorations.markings}

\usepackage{tkz-euclide}

\newtheorem{lemma}{Lemma}[section]

\newtheorem{theorem}[lemma]{Theorem}

\theoremstyle{definition}
\newtheorem{remark}[lemma]{Remark}

\newcommand{\Z}{\mathbb{Z}}

\newcommand{\R}{\mathbb{R}}

\usepackage{bbm}

\newcommand{\Ca}{C_a}
\newcommand{\Cb}{C_b}
\sloppy

\usepackage{titlesec}

\titleformat*{\section}{\large\bfseries}
\titleformat*{\subsection}{\Large\bfseries}
\titleformat*{\subsubsection}{\large\bfseries}
\titleformat*{\paragraph}{\large\bfseries}
\titleformat*{\subparagraph}{\large\bfseries}

\title{Folding of quadrilaterals, zigzags, and Arnold-Liouville integrability}
\author{Anton Izosimov\thanks{
Department of Mathematics,
University of Arizona and School of Mathematics \& Statistics, University of Glasgow;
e-mail: {\tt Anton.Izosimov@glasgow.ac.uk}
} }
\date{}
\begin{document}

\tikzset{->-/.style={decoration={
  markings,
  mark=at position .7 with {\arrow{>}}},postaction={decorate}}}
  
\usetikzlibrary{angles, quotes}

\maketitle

\abstract{
We put Darboux's porism on folding of quadrilaterals, as well as closely related Bottema's zigzag porism, in the context of Arnold-Liouville integrability. 
}
\section{Introduction}

One of the historically first manifestations of integrability is \emph{Poncelet's  porism}, also known as \emph{Poncelet's  closure theorem}. Poncelet's theorem says that if a planar $n$-gon is inscribed in a conic $C_1$ and circumscribed about another conic $C_2$, then any point of $C_1$ is a vertex of such an $n$-gon, see Figure \ref{Fig:poncelet}. 
The two arguably most standard proofs of this theorem are based, respectively, on complex and symplectic geometry. The complex proof goes roughly as follows. One can identify the space of tangents dropped from a point on $C_1$ to $C_2$ with an elliptic curve. The successive sides of a polygon  inscribed in $C_1$ and circumscribed about $C_2$ are points on that curve related to each by a fixed translation. This polygon closes up if and only if the translation vector is a torsion point on the elliptic curve. Whether or not that is the case depends only on $C_1$ and $C_2$, but not on the initial point, so all polygons  inscribed in $C_1$ and circumscribed about $C_2$ will close up after the same number of steps \cite{griffiths1977poncelet}. 

\begin{figure}[b]
\centering
\includegraphics[width = 4 cm]{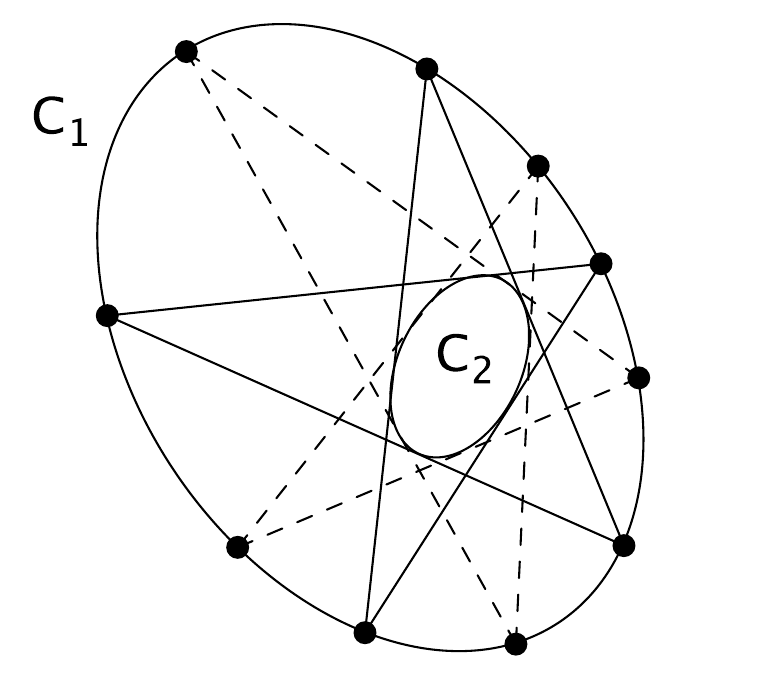}
\caption{Every point of $C_1$ is a vertex of a pentagon inscribed in $C_1$ and circumscribed about $C_2$.}\label{Fig:poncelet}
\end{figure}

The second, symplectic, proof is based on the fact that any two generic conics can be mapped, by a projective transformation, to confocal conics. In the confocal case, a polygon  inscribed in $C_1$ and circumscribed about $C_2$  can be identified with a billiard trajectory in $C_1$. The billiard in a conic is an integrable system, and any two polygons  inscribed in $C_1$ and circumscribed about $C_2$ correspond to trajectories belonging to the same level set of the first integral. Hence, by Arnold-Liouville theorem, if one of the trajectories is periodic with period $n$, then so is the other one, cf. \cite{levi2007poncelet}.

A lesser known relative of Poncelet's porism is \emph{Darboux's porism on folding of quadrilaterals}. \emph{Folding} of a vertex of a planar polygon is the reflection of that vertex is the diagonal joining its neighbors, see Figure \ref{Fig:darboux}. Darboux's porism says that if a sequence of alternating foldings of adjacent vertices restores, after $2n$ steps, the initial polygon, then this is the case for any polygon with the same side lengths. For example, folding any polygon with side lengths $1, 3, 3\sqrt{5}, 5$ six times, we come back to the initial polygon, see \cite[Figure 2]{Izm}.

\begin{figure}[t]
\centering
\includegraphics[width = 6.5 cm]{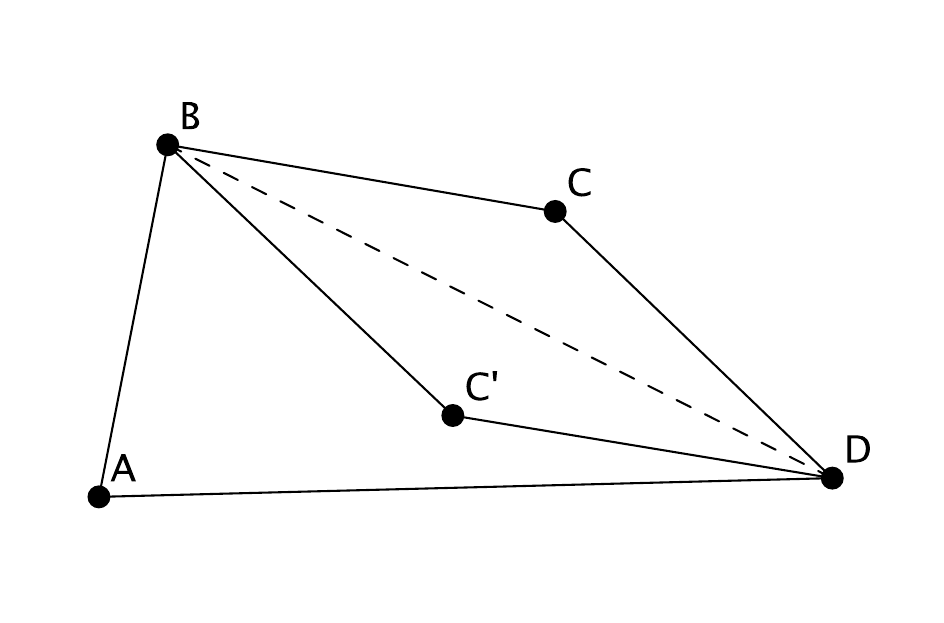}
\caption{Folding of the vertex $C$ of a quadrilateral $ABCD$. Its new position is $C'$.}\label{Fig:darboux}
\end{figure}

Just like Poncelet's porism, Darboux's theorem can be proved using elliptic curves. Specifically, one shows that the complexified moduli space of quadrilaterals with fixed side length is an elliptic curve. Composition of foldings at adjacent vertices amounts to a translation on that curve. Whether or not a sequence of foldings restores the initial polygon depends on whether the translation vector is a torsion point and is independent on the particular choice of a quadrilateral \cite{Izm}.

What currently seems to be missing in the literature is a symplectic proof of Darboux's theorem. We provide such a proof in the present paper. Specifically, we show that, in an appropriate sense, Darboux's folding is  Arnold-Liouville integrable, and deduce Darboux's porism. 

Furthermore, we extend these results to Bottema's \emph{zigzag porism} \cite{bottema1965schliessungssatz}, which can be stated as follows. Let $\Ca$ and $\Cb$ be two circles such that there exists a unit equilateral $2n$-gon
whose odd-indexed vertices lie on $\Ca$ and even-indexed vertices lie on
$\Cb$. Then there exist infinitely many such $2n$-gons. The zigzag porism is equivalent to Darboux's porism when the circles are coplanar \cite{csikos2000remarks}, but is in fact valid for any two circles in $\R^3$~\cite{black1974theorem}. We construct the underlying Arnold-Liouville integrable system in this more general setting. 
\medskip

{\bf Acknowledgments.} The author learned about the zigzag porism from the referee of the first version of this paper. The author is grateful to Max Planck Institute for Mathematics in
Bonn for its hospitality and financial support. This work was partially supported by the Simons Foundation through the Travel Support for Mathematicians program. Figures were created with help of software package Cinderella.

\section{Arnold-Liouville integrability of folding}\label{sec:alif}

Let $\mathcal P$ be the space of quadrilaterals $ABCD$ with fixed side lengths, considered up to orientation-preserving isometries. There is abundant literature on the topology of such spaces for polygons with any number of vertices, see \cite{kapovich1995moduli} and references therein. The space $\mathcal P$ is a smooth manifold assuming that there is no linear combination of side lengths with coefficients $\pm 1$ which is equal to zero \cite[Lemma 2]{kapovich1995moduli}. In the case of quadrilaterals, this manifold is diffeomorphic to a circle or disjoint union of two circles, see \cite[Theorem 1]{kapovich1995moduli}. These circles are distinguished by the sign of the oriented area and are interchanged by an orientation-reversing isometry, cf. \cite[Section 10]{kapovich1995moduli}.

Denote by $F_B \colon \mathcal P \to \mathcal P$ folding of the vertex $B$. This mapping is well-defined assuming that the vertices $A$ and $C$ cannot come together. This holds provided that the side lengths satisfy at least one of the following conditions: $|AB| \neq |BC|$ or $|AD| \neq |CD|$. 
Likewise, let $F_C \colon \mathcal P \to \mathcal P$ be folding of $C$, and let $F := F_C \circ F_B$ be the composition of the two foldings. Darboux's porism says that if $F^n(P) = P$ for some quadrilateral $P \in \mathcal P$, then $F^n$ is the identity mapping on $\mathcal P$.  We shall prove this by establishing Arnold-Liouville integrability of the mapping $F$.

Clearly, $F$ cannot be Arnold-Liouville integrable on the space $\mathcal P$ of quadrilaterals with fixed side lengths, as the latter space is one-dimensional. So, we consider a bigger space ${\mathcal{P'}}$ of quadrilaterals with fixed lengths of the sides $AB, BC, CD$, again considered up to orientation-preserving isometries. This space is diffeomorphic to a two-dimensional torus and is parametrized by the oriented angles $\angle ABC$ and $\angle BCD$. The squared length of the side $AD$ is a smooth function of the torus $\mathcal P'$. The space $\mathcal P$ of quadrilaterals with fixed lengths of all sides is a level set of that function.

\begin{theorem}\label{dap}
The folding mapping  $F = F_C \circ F_B$ is Arnold-Liouville integrable on the moduli space $\mathcal P'$ of quadrilaterals $ABCD$ with fixed lengths of the sides $AB, BC, CD$.
\end{theorem}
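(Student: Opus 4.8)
The plan is to exhibit $F = F_C \circ F_B$ as a symplectic (or, failing that, Poisson) map on the torus $\mathcal{P}'$ and to produce an invariant function whose level sets are the fibers of the Arnold-Liouville foliation. Since $\mathcal{P}'$ is two-dimensional, integrability here means: $F$ preserves a symplectic (area) form $\omega$, and $F$ preserves a function $I$ whose generic level sets are circles, on each of which $F$ acts as a translation in a suitable angular coordinate. The natural candidate for the invariant $I$ is the squared length $|AD|^2$ of the fourth side, viewed as a function on $\mathcal{P}'$: its level sets are precisely the spaces $\mathcal{P}$ of quadrilaterals with all four side lengths fixed, which we already know to be (unions of) circles. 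I would first verify that $F$ preserves $|AD|^2$, and then that $F$ preserves a natural invariant area form on $\mathcal{P}'$.

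First I would set up coordinates. Parametrize $\mathcal{P}'$ by the two oriented angles $\beta = \angle ABC$ and $\gamma = \angle BCD$, so $\mathcal{P}' \cong (\R/2\pi\Z)^2$. A single folding $F_B$ reflects $B$ across the diagonal $AC$; geometrically this fixes the diagonal $AC$ (hence fixes $|AC|$ and the shape of triangle $ACD$) while reflecting triangle $ABC$. I would compute the action of $F_B$ and $F_C$ on $(\beta, \gamma)$ explicitly, most cleanly by introducing the two diagonals $p = |AC|$ and $q = |BD|$ as intermediate variables. The key classical observation is that folding at $B$ fixes $p = |AC|$ and maps $q \mapsto q'$, where $q'$ is the other root of the quadratic expressing $|BD|^2$ in terms of $|AC|^2$ via the two triangles $ABC$ and $ACD$; symmetrically, $F_C$ fixes $q$ and changes $p$. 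This diagonal picture is the folding analogue of the billiard-in-a-conic description, and it makes the invariance of the four side lengths manifest: each folding preserves all of $|AB|, |BC|, |CD|, |AD|$, so in particular $F$ preserves $I := |AD|^2$ on $\mathcal{P}'$.

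Next I would construct the invariant measure. The cleanest route is to find a $1$-form or density on $\mathcal{P}'$ pulled back to itself by both $F_B$ and $F_C$. Because each folding is, in the diagonal coordinates $(p,q)$, essentially a reflection interchanging the two roots of an explicit quadratic, I expect an invariant form of the type $\omega = dp \wedge dq / \sqrt{\Phi(p,q)}$, where $\Phi$ is the relevant discriminant-type polynomial; this is the folding counterpart of the standard invariant form on a level set of an elliptic pencil. Concretely, I would write the relations between $(\beta,\gamma)$ and $(p,q)$, express the candidate area form, and check that the reflection $q \mapsto q'$ (resp. $p \mapsto p'$) preserves it — a computation that reduces to the fact that a hyperelliptic differential $dx/\sqrt{\text{quartic}}$ is invariant under the root-swapping involution. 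Once $\omega$ is shown to be $F_B$- and $F_C$-invariant, it is $F$-invariant, and together with the invariant $I$ this gives a pair $(I, \omega)$ making $F$ an integrable map: on each regular level set $\{I = \mathrm{const}\}$, which is a circle, the area form $\omega$ induces a translation-invariant measure, and an area-preserving map of a circle preserving this measure is conjugate to a rotation.

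The main obstacle I anticipate is the construction and verification of the invariant area form $\omega$, rather than the invariance of $I$, which is almost tautological. Producing the correct polynomial $\Phi$ under the square root — and checking that both foldings, which are genuinely distinct involutions, preserve the \emph{same} form — requires care, since a naive invariant density for one folding need not work for the other. I expect the symmetry to emerge from the symmetric roles of the two diagonals $p$ and $q$, so that $\Phi$ is built symmetrically from the law-of-cosines relations in the four triangles. The final step, deducing that an area-preserving circle map is a rotation and hence that periodicity is independent of the initial point, is the Arnold-Liouville mechanism and should be routine once $(I,\omega)$ are in hand; this is exactly what yields Darboux's porism as stated before the theorem.
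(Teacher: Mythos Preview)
Your overall strategy---find an $F$-invariant function $I$ and an $F$-invariant area form $\omega$ on the torus $\mathcal{P}'$---is exactly what the paper does, and your choice $I = |AD|^2$ is the same. The difference is in the area form. You propose to pass to diagonal coordinates $(p,q)=(|AC|,|BD|)$ and look for something of the shape $dp\wedge dq/\sqrt{\Phi(p,q)}$, and you correctly flag this as the main obstacle. The paper sidesteps this entirely: it takes $\Omega = d\beta\wedge d\gamma$ with $\beta=\angle ABC$, $\gamma=\angle BCD$, and checks directly that each single folding sends $\Omega$ to $-\Omega$. For instance, $F_C$ acts by $(\beta,\gamma)\mapsto(\beta-2\angle CBD,\ 2\pi-\gamma)$; since $|BC|,|CD|$ are fixed, $\angle CBD$ is a function of $\gamma$ alone, so $d(\angle CBD)\wedge d\gamma=0$ and $F_C^*\Omega=-\Omega$. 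Two foldings then give $F^*\Omega=\Omega$. Your diagonal-coordinate route is viable in principle (and is the natural QRT/elliptic-pencil picture), but note that $(p,q)$ are not global coordinates on $\mathcal{P}'$---each diagonal length is a $2$-to-$1$ function of the corresponding angle---so you would need to manage branching, whereas the paper's two-line computation in $(\beta,\gamma)$ avoids all of this. What your approach buys is a more explicit link to the elliptic-curve picture; what the paper's buys is that the ``hard'' step you anticipated turns out to be trivial.
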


\begin{proof}
Folding does not affect side lengths. In particular, $|AD|^2$ is a first integral of $F$. Furthermore, the map $F \colon \mathcal P' \to \mathcal P'$ has an invariant symplectic structure given by
$$
\Omega := d\angle ABC \wedge d\angle BCD.
$$
To show invariance, consider, for instance, folding of the vertex $C$ depicted in Figure \ref{Fig:darboux}. The pullback of the symplectic form $\Omega$ by this map is
$$
F_C^*\Omega =d \angle ABC'  \wedge d  \angle BC'D = d(\angle ABC - 2\angle CBD) \wedge d(2\pi -  \angle BCD) = -\Omega -2d\angle CBD \wedge d \angle BCD.
$$
Furthermore, since the side lengths $|BC|$ and $|CD|$ are fixed, the angle $\angle CBD$ is a function of the angle $\angle BCD$ and is independent of the angle $\angle ABC$. So,
$
d\angle CBD \wedge d \angle BCD = 0,
$
implying
$$
F_C^*\Omega = - \Omega,
$$
i.e., the form $\Omega$ is \emph{anti-invariant} under a single folding, and hence invariant under $F$.
\end{proof}


\section{Darboux's porism}

\begin{theorem}[{Darboux's porism}] \label{dpt}
Assume we are given a quadrilateral which restores its initial shape after $2n$ alternating foldings at adjacent vertices. Suppose its side lengths are such that no linear combination of them with coefficients $\pm 1$ is equal to zero. Then any quadrilateral with  the same side lengths  restores its initial shape after $2n$ alternating foldings at adjacent vertices. 
\end{theorem}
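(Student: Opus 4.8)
The plan is to deduce the porism directly from the Arnold--Liouville integrability established in Theorem~\ref{dap}. Fix the common side lengths, set $H := |AD|^2$, and let $\ell = |AD|$. The space $\mathcal{P}$ of all quadrilaterals with these side lengths is exactly the level set $\{H = \ell^2\}$ of the first integral inside the torus $\mathcal{P}'$, and $F$ maps $\mathcal{P}$ to itself because it preserves $H$. First I would check that the hypothesis on $\pm 1$-combinations guarantees that $\ell^2$ is a \emph{regular} value of $H$: a critical point would force a collinear, degenerate configuration, which is precisely what that hypothesis excludes (cf. the smoothness criterion for $\mathcal{P}$ quoted in Section~\ref{sec:alif}). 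Hence $\mathcal{P}$ is a smooth compact one-dimensional submanifold, i.e. a circle or a disjoint union of two circles.

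The heart of the matter is to upgrade ``integrable'' to ``$F$ acts as a rotation on each component of $\mathcal{P}$''. Since $F^*\Omega = \Omega$ and $F^*H = H$ (recall $F_C^*\Omega = F_B^*\Omega = -\Omega$, so their composition preserves $\Omega$), the map $F$ preserves the Hamiltonian vector field $X = X_H$ of $H$. On the regular level set $\mathcal{P}$ this field is nowhere vanishing and tangent to $\mathcal{P}$, so its flow rotates each component circle. Because $F$ commutes with this flow, writing $X = \partial_\theta$ in an angle coordinate $\theta$ and using $F(\theta + t) = F(\theta) + t$ forces $F(\theta) = \theta + c$ with $c$ constant along the circle; equivalently, in action--angle coordinates $(I,\theta)$ with $\Omega = dI \wedge d\theta$ and $H = H(I)$, the map takes the form $(I,\theta) \mapsto (I, \theta + c(I))$. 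Thus $F$ restricts to a rotation on each connected component of $\mathcal{P}$, and its rotation angle is the same at every point of a given component.

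The porism now follows. If $F^n(P) = P$ for some $P$, then on the component circle through $P$ the rotation angle satisfies $nc \in 2\pi\mathbb{Z}$, a condition independent of the point; hence $F^n$ is the identity on that entire circle. To cover the second component when $\mathcal{P} = \mathcal{C}_1 \sqcup \mathcal{C}_2$, I would use the global reflection $\sigma$ of the plane: it descends to a well-defined involution of $\mathcal{P}'$, preserves all side lengths, reverses oriented area, and therefore interchanges $\mathcal{C}_1$ and $\mathcal{C}_2$. Since a folding is a reflection across a diagonal, it commutes with $\sigma$, and hence so does $F$. Consequently, for $Q = \sigma(P') \in \mathcal{C}_2$ one gets $F^n(Q) = \sigma(F^n(P')) = Q$, transporting $F^n = \mathrm{id}$ from $\mathcal{C}_1$ to $\mathcal{C}_2$. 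Therefore $F^n = \mathrm{id}$ on all of $\mathcal{P}$, which is the assertion of the theorem.

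The step I expect to be the main obstacle is exactly this passage from integrability to the rotation statement, i.e. making the Arnold--Liouville theorem precise for the symplectomorphism $F$ rather than for a Hamiltonian flow: one must verify that $F$ genuinely commutes with the flow of $X$ and that the resulting shift $c$ is constant along each circle. Two bookkeeping points also require care: confirming regularity of the level set from the $\pm 1$-hypothesis (so that $X$ is nonvanishing), and the disconnected case, where one must additionally note that if $F$ interchanges the two components then $n$ is necessarily even and the rotation argument is applied to $F^2$ before invoking $\sigma$.
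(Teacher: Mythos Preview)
Your proof is correct and follows essentially the same route as the paper's: both appeal to regularity of the level set from the $\pm1$-hypothesis, use Arnold--Liouville integrability to conclude $F^n = \mathrm{id}$ on the component of $\mathcal{P}$ containing $P$, and handle the possible second component via the orientation-reversing isometry that commutes with foldings. The only difference is that you unpack the Arnold--Liouville step explicitly (showing $F$ commutes with the Hamiltonian flow of $H$ and hence acts as a rigid rotation on each level circle), whereas the paper invokes it as a black box; your remark about the case where $F$ might swap components is a detail the paper does not address.
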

\begin{remark}
The condition on linear combinations of side length cannot be avoided. Consider, for instance a quadrilateral with all four vertices along a line, shown in Figure \ref{Fig:darboux2}. Here we have $|AB| =2$, $|BC| = 1 $, $|CD| = 2$, $|AD| = 3$. Clearly, this quadrilateral is invariant under any folding. However, that is not so for a generic quadrilateral with side lengths $2,1,2,3$.
\end{remark}

\begin{proof}[Proof of Theorem \ref{dpt}]
The assumption on linear combinations of  side lengths ensures that the moduli space $\mathcal P$ of quadrilaterals with such side lengths is a regular level set of the function $|AD|^2$ on the moduli space $\mathcal P'$ of polygons with fixed lengths of  $AB, BC, CD$. We are given that there is a quadrilateral $P \in \mathcal P$ on that level set such that $F^n(P) = P$. So, by Arnold-Liouville integrability of $F$, we have that $F^n$ is the identity on the connected component of $\mathcal P$ containing $P$. Moreover, since there are at most two components, and they are interchanged by an orientation-reversing isometry which commutes with foldings, we must have that $F^n$ is the identity of the whole of $\mathcal P$, as desired.
\end{proof}

\section{A remark on polygons with more vertices}
The $F$-invariant  symplectic form on the moduli space $\mathcal P'$ of quadrilaterals with  fixed lengths of the sides $AB, BC, CD$ induces an $F$-invariant non-vanishing $1$-form on any non-singular level set of the first integral $|AD|^2$, i.e., on the moduli space $\mathcal P$ of quadrilaterals  with fixed side lengths.  The existence of this $1$-form is at heart of Arnold-Liouville theorem. It can be shown that, up to a constant factor, this form is given by
$$
\frac{d\angle ABC}{\text{area of}\, \triangle ACD}
$$
This expression is invariant under cyclic permutation of vertices, up to sign. Likewise, the expression 
$$
\frac{d\phi_{i+2} \wedge \dots \wedge d \phi_{i-2}}{\text{area of the triangle formed by vertices} \, i-1,i,i+1},
$$
 where $\phi_i$ is the angle subdued at $i$th vertex (the indices are understood cyclically, modulo $n$), gives a volume form on the moduli space of $n$-gons with fixed side lengths which is anti-invariant under each folding and hence invariant under an even number of foldings. However, for $n > 4$, this does not imply any kind of integrable behavior.   Moreover, already for pentagons a random sequence of foldings has dense orbits on the moduli space~$\mathcal P$~\cite{cantat2023random}.

\begin{figure}[t]
\centering
\includegraphics[width = 8 cm]{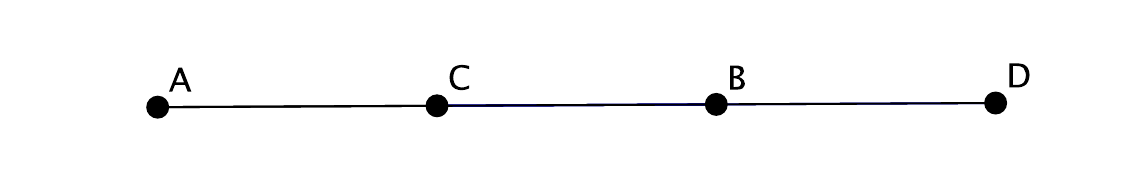}
\caption{A degenerate polygon.}\label{Fig:darboux2}
\end{figure}

\section{The zigzag porism}
Let $\Ca$ and $\Cb$ be two circles in $\R^3$. A \emph{zigzag} between $\Ca$ and $\Cb$ is an equilateral polygon
whose odd-indexed vertices lie on $\Ca$ and even-indexed vertices lie on
$\Cb$. The zigzag porism says that if there exists a closed $2n$-gonal zigzag between $\Ca$ and $\Cb$, then any zigzag  between $\Ca$ and $\Cb$ with the same edge length is also a closed $2n$-gon \cite{bottema1965schliessungssatz, black1974theorem}, see Figure \ref{Fig:zigzag}.

\begin{figure}[t]
\centering
\includegraphics[width = 6.5 cm]{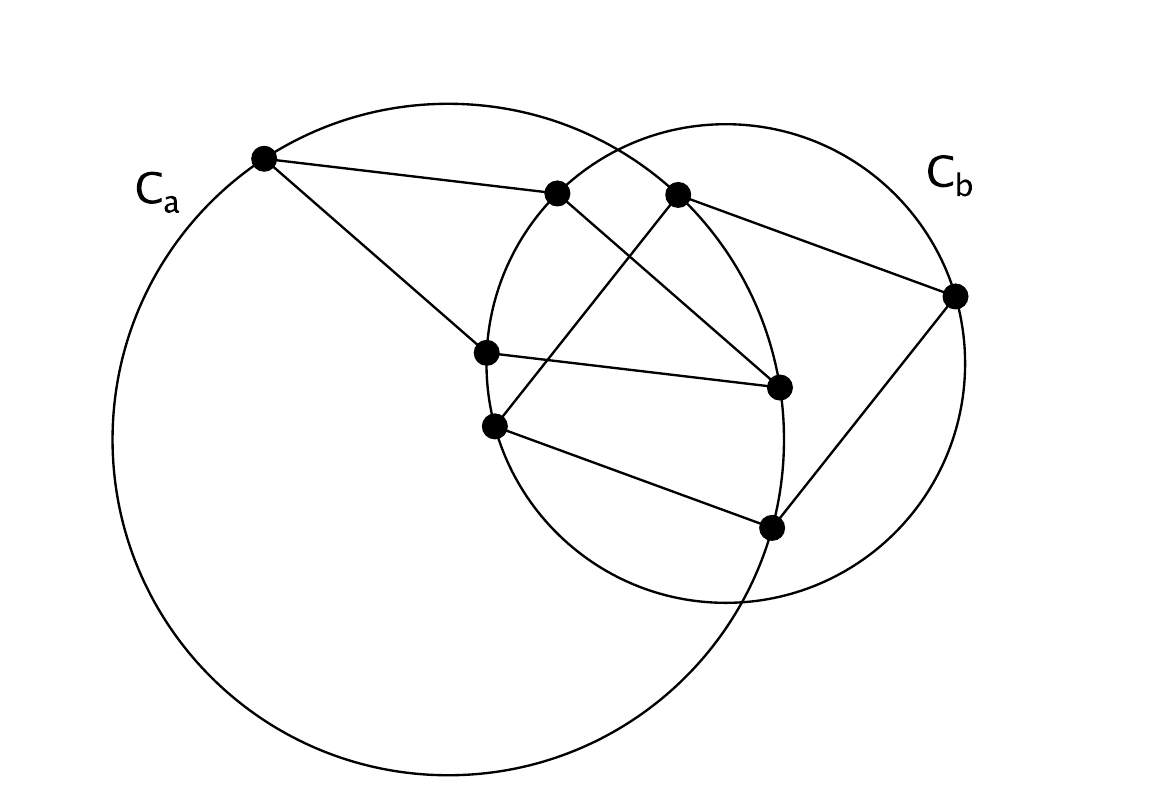}
\caption{The zigzag porism: all zigzags with the same edge length close after the same number of steps.}\label{Fig:zigzag}
\end{figure}

\begin{figure}[t]
\centering
\includegraphics[width = 7 cm]{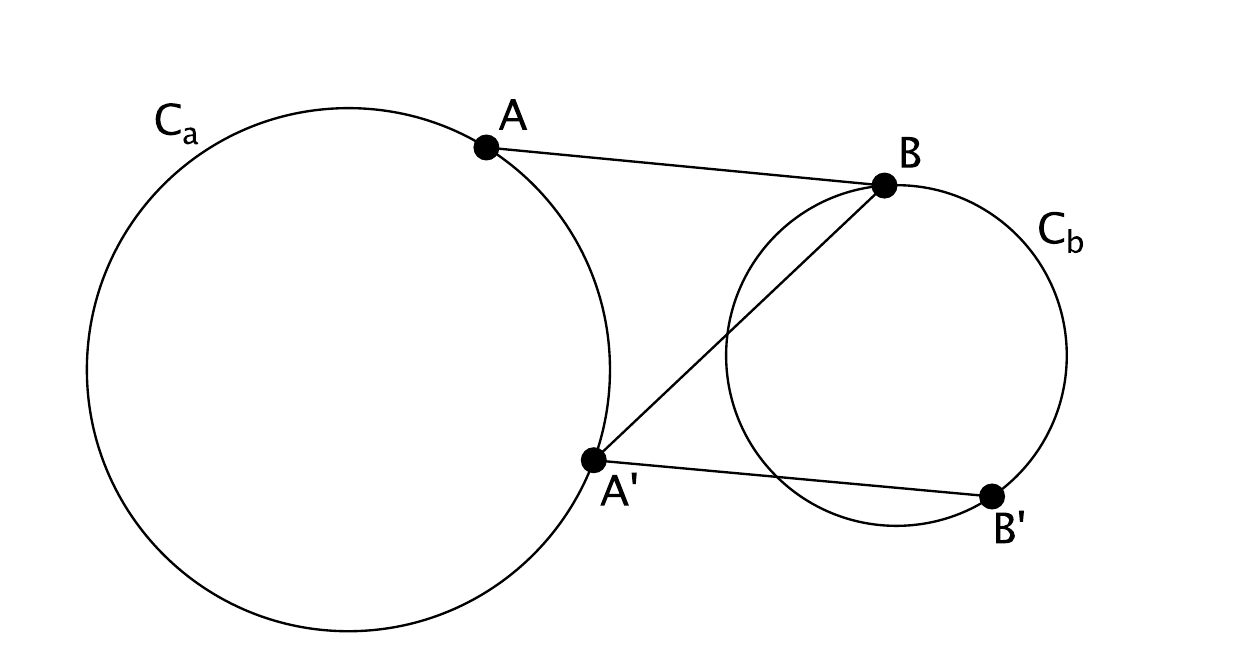}
\caption{The zigzag map $Z \colon (A,B) \mapsto (A', B')$.}\label{Fig:zigzagb}
\end{figure}

\begin{figure}[b]
\centering
\includegraphics[width = 7 cm]{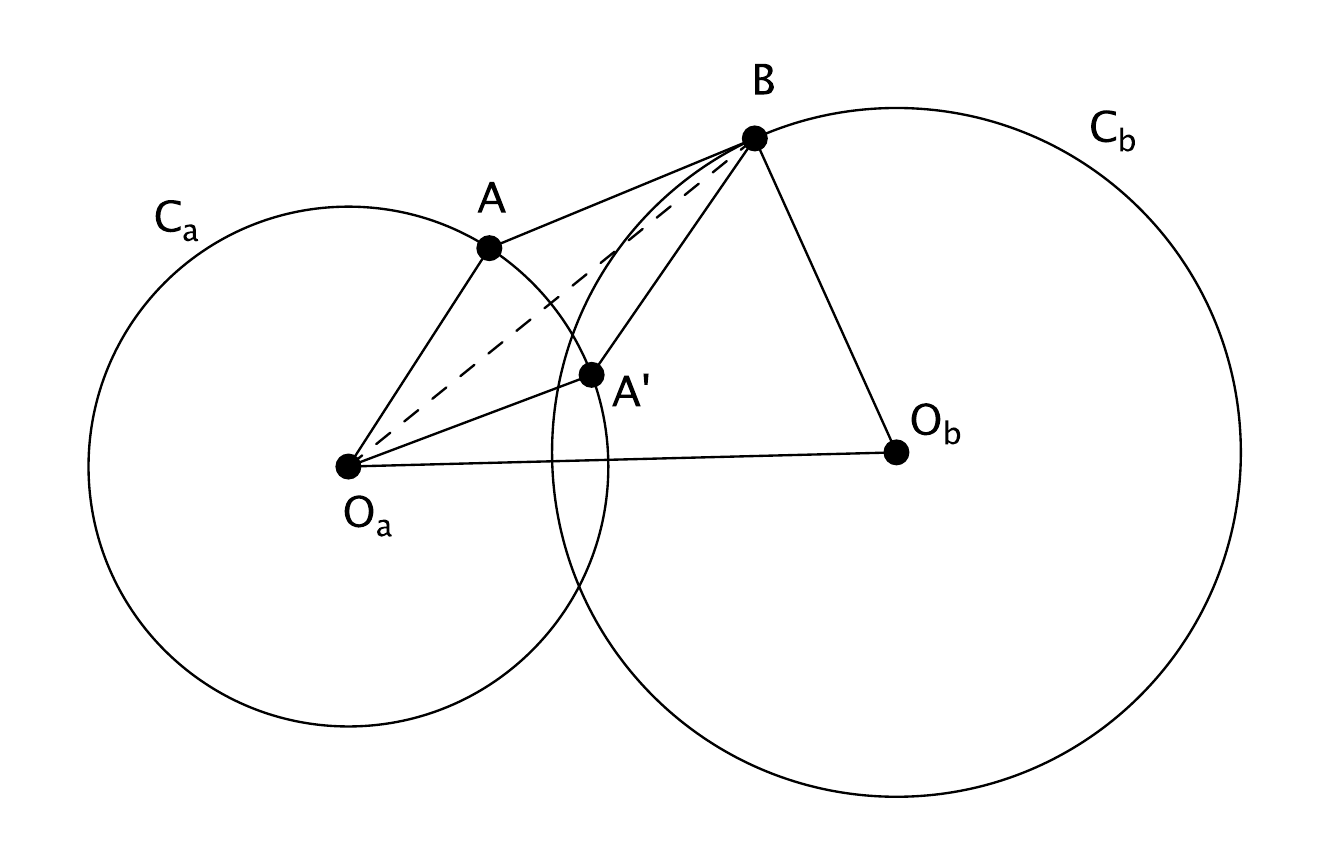}
\caption{Two successive legs $AB$, $BA'$ of a zigzag are related by folding.}\label{Fig:zigzagf}
\end{figure}

A zigzag between two circles $\Ca$, $\Cb$ may be built by iterating the \emph{zigzag map} $Z \colon \Ca \times \Cb \to  \Ca \times \Cb$ which sends a pair $A \in \Ca$, $B \in \Cb$ to a pair $A' \in \Ca$, $B' \in \Cb$ such that $|A'B'| = |A'B| = |AB|$, see Figure \ref{Fig:zigzagb}. This map is a composition of two involutions, namely $(A,B) \mapsto (A',B)$, where $|A'B| = |AB|$, and $(A',B) \mapsto (A',B')$, where $|A'B'| = |A'B|$. Observe that, in the case when the circles $\Ca, \Cb$ are coplanar, these involutions are just foldings of the quadrilateral $O_aA B O_b$, where $O_a$, $O_b$ are centers of $\Ca, \Cb$, at $A$ and $B$ respectively, see Figure \ref{Fig:zigzagf}. So, the planar case of the zigzag porism is equivalent to Darboux's porism \cite{csikos2000remarks}. Here we show that the integrability result carries over to the spatial situation:

\begin{figure}[t]
\centering
\includegraphics[width = 6 cm]{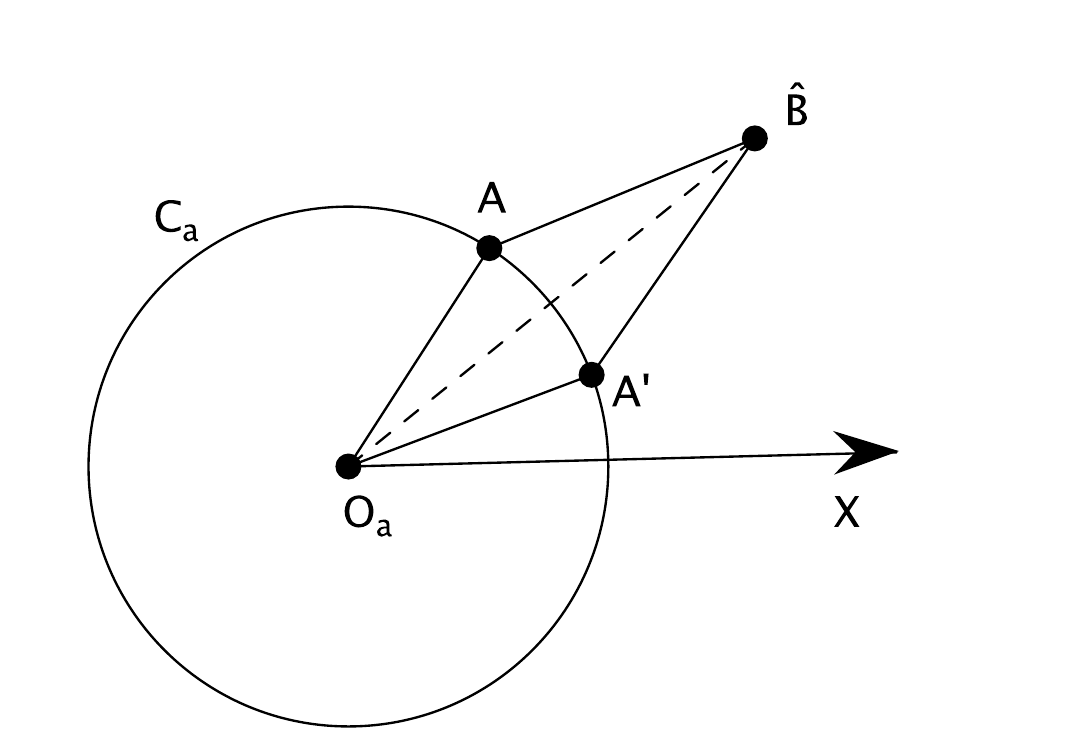}
\caption{The involution $(A,B) \mapsto (A', B)$ takes the form  $d \phi_a \wedge d\phi_b$ to  $-d \phi_a \wedge d\phi_b$.}\label{Fig:zigzagpf}
\end{figure}
\begin{theorem}\label{zzp}
The zigzag map $Z$ is Arnold-Liouville integrable for any circles $\Ca$, $\Cb$ in $\R^3$.
\end{theorem}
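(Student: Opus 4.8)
The plan is to mirror the proof of Theorem \ref{dap}, replacing the angle coordinates on $\mathcal P'$ by angular coordinates on the two circles. Since each of $\Ca$, $\Cb$ is a circle, the phase space $\Ca \times \Cb$ is a two-dimensional torus, which I parametrize by the angular positions $\phi_a$ of $A$ on $\Ca$ and $\phi_b$ of $B$ on $\Cb$, as in Figure \ref{Fig:zigzagpf}. I equip it with the standard torus area form
$$
\Omega := d\phi_a \wedge d\phi_b,
$$
and take as first integral the common leg length $\ell := |AB|^2$. That $\ell$ is a first integral is immediate from the construction: the involution $(A,B) \mapsto (A',B)$ keeps $|A'B| = |AB|$, and $(A',B) \mapsto (A',B')$ keeps $|A'B'| = |A'B|$, so $\ell$ is preserved by each involution and hence by $Z$.

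Next I would establish that $\Omega$ is invariant. As in Theorem \ref{dap}, it suffices to show that each defining involution is \emph{anti}-invariant, so that their composition $Z$ preserves $\Omega$. Consider the first involution, which fixes $B$ and sends $A$ to the second point $A'$ of $\Ca$ at distance $|AB|$ from $B$. The crucial observation---and the only place the ambient geometry enters---is that for fixed $B$ the squared distance from $B$ to the point $A(\phi_a) \in \Ca$ is a pure sinusoid in $\phi_a$: writing $A(\phi_a) = O_a + r_a(\cos\phi_a\, u + \sin\phi_a\, v)$ for an orthonormal frame $u,v$ of the plane of $\Ca$, one gets
$$
|A(\phi_a) - B|^2 = \mathrm{const} + 2 r_a\, (O_a - B)\cdot(\cos\phi_a\, u + \sin\phi_a\, v) = \mathrm{const} + C\cos(\phi_a - \phi_0),
$$
where the amplitude $C$ and phase $\phi_0$ depend on $B$ alone. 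The two solutions of $|A(\phi) - B| = |AB|$ are thus $\phi = \phi_a$ and $\phi = 2\phi_0 - \phi_a$, so the involution acts by the reflection $\phi_a \mapsto 2\phi_0(\phi_b) - \phi_a$ while leaving $\phi_b$ fixed. Pulling back,
$$
\sigma^*\Omega = \bigl(-d\phi_a + 2\,\phi_0'(\phi_b)\,d\phi_b\bigr)\wedge d\phi_b = -d\phi_a\wedge d\phi_b = -\Omega,
$$
the $\phi_0$-term dropping out exactly as the term $d\angle CBD \wedge d\angle BCD$ did in the proof of Theorem \ref{dap}. The same argument with the roles of the circles exchanged handles the second involution, giving $Z^*\Omega = \Omega$.

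Finally I would assemble the conclusion. On the torus $\Ca \times \Cb$ the map $Z$ preserves both the symplectic form $\Omega$ and the first integral $\ell$, and since the system has one degree of freedom, a single integral is all that Arnold-Liouville integrability requires: on each regular level set of $\ell$---a disjoint union of circles---the form $\Omega$ induces an invariant non-vanishing $1$-form, and $Z$ acts as a translation in the corresponding angle coordinate. This is precisely the discrete Arnold-Liouville picture, and it specializes to Theorem \ref{dap} when $\Ca$, $\Cb$ are coplanar. I expect the one genuinely substantive point to be the anti-invariance computation; the pleasant feature is that it is no harder in $\R^3$ than in the plane, because the squared distance from a fixed point to a point on a circle stays sinusoidal whether or not that point lies in the plane of the circle.
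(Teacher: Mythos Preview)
Your proof is correct and follows essentially the same route as the paper: the same phase space $\Ca\times\Cb$, the same symplectic form $d\phi_a\wedge d\phi_b$, the same integral $|AB|^2$, and the same reduction to anti-invariance of each involution. The only cosmetic difference is in how you justify that the involution $(A,B)\mapsto(A',B)$ acts as $\phi_a\mapsto f(\phi_b)-\phi_a$: the paper projects $B$ orthogonally to the plane of $\Ca$ and reads off the reflection geometrically, while you compute directly that $|A(\phi_a)-B|^2$ is sinusoidal in $\phi_a$; these are the same observation in different clothing, since the phase $\phi_0$ in your formula is precisely the angular coordinate of that projection.
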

\begin{proof}
By definition, the map $Z \colon (A,B) \mapsto (A', B')$ preserves the squared distance between $A$ and $B$. So, it suffices to find an area form on  $\Ca \times \Cb$ invariant under $Z$. Let $\phi_a, \phi_b \in \R/2\pi \Z$ be standard angular parameters on $\Ca$, $\Cb$. We will prove that the form $d \phi_a \wedge d\phi_b$ on $\Ca \times \Cb$ is preserved by $Z$. To that end, it suffices to establish anti-invariance of that form with respect to the involutions whose composition gives $Z$. Furthermore, since those involutions are related to each other by interchanging the roles of the circles $C_a, C_b$, it is sufficient to consider the involution $(A,B) \mapsto (A', B)$ defined by the condition $|A'B| = |AB|$, where $A, A' \in C_a$. Let $\hat B$ be the orthogonal projection of $B$ onto the plane containing $C_a$. Then $|A\hat B| = |A'\hat B|$, see Figure \ref{Fig:zigzagpf}. Here $O_aX$ is the reference direction used to define the angular coordinated $\phi_a$ on $\Ca$. We have
$$
\angle XO_a A + \angle XO_a A' = 2 \angle XO_a \hat B.
$$
So, the sum on the left only depends on the position of the point $B$ but not $A$. Therefore, in coordinates $\phi_a, \phi_b$, the involution $(A,B) \mapsto (A', B)$ has the form
$$
(\phi_a, \phi_b) \mapsto (f(\phi_B) - \phi_a, \phi_b)
$$
for a certain smooth function $f$. So, the form $ d \phi_a \wedge d\phi_b $ is indeed anti-invariant under this involution.
\end{proof}
In terms of the map $Z$, the zigzag porism says that if an orbit of $(A,B) \in \Ca \times \Cb$ under $Z$ is $n$-periodic, then the same holds for any $(A',B') \in \Ca \times \Cb$ with $|A'B'| = |AB|$. This is derived from Theorem \ref{zzp} in the same way as Darboux's porism is obtained from Theorem \ref{dap}.




\bibliographystyle{plain}
\bibliography{fold.bib}

\end{document}